\documentclass[11pt,reqno]{amsart}
\usepackage{graphicx}

\usepackage{amscd,amssymb,amsmath,amsthm}
\usepackage{graphicx}
\usepackage{color}
\usepackage{cite}
\topmargin=0.1in \textwidth6.1in \textheight8.1in

\newtheorem{thm}{Theorem}

\newtheorem{lemma}{Lemma}
\newtheorem{pro}{Proposition}
\newtheorem{rk}{Remark}

\numberwithin{equation}{section} \setcounter{tocdepth}{1}
\def \t {\theta}


\def\s{\sigma}

\def\s{\sigma}

\def\s{\sigma}

\def\t{\theta}

\def \t {\theta}


\begin{document}
\title[Boundary conditions for translation-invariant Gibbs measures]{Boundary conditions for translation-invariant Gibbs measures of the Potts model on Cayley trees}

\author{D. Gandolfo, M.M. Rahmatullaev,  U. A. Rozikov}

 \address{D.\ Gandolfo \\Centre de Physique Th\'eorique, UMR 7332,
 Aix Marseille Univ, Universit\'e de Toulon, CNRS, CPT, Marseille, France.}
\email {gandolfo@cpt.univ-mrs.fr}

\address{M. \ M. \ Rahmatullaev and U.\ A.\ Rozikov\\ Institute of mathematics,
29, Do'rmon Yo'li str., 100125, Tashkent, Uzbekistan.}
\email {mrahmatullaev@rambler.ru\ \   rozikovu@yandex.ru}
\begin{abstract}
We consider translation-invariant splitting Gibbs measures (TISGMs) for the $q$-state Potts model on a Cayley tree of order two.
Recently a full description of the TISGMs was obtained, and it was shown
 in particular that at sufficiently low temperatures their number is  $2^{q}-1$.
 In this paper  for each TISGM $\mu$
 we explicitly give the set of boundary conditions such that limiting Gibbs measures with respect to these boundary conditions coincide with $\mu$.
 \end{abstract}
\maketitle

{\bf Mathematics Subject Classifications (2010).} 82B26; 60K35.

{\bf{Key words.}} Cayley tree, Potts model, boundary condition,
Gibbs measure.

\section{Introduction.}

The analysis of translational invariant splitting Gibbs measures of the $q$-state Potts model on Cayley trees is based on the classification
 of translation-invariant boundary laws  which are in one-to-one correspondence with the TISGMs.
Recall that boundary laws are length-$q$ vectors which satisfy a
non-linear fixed-point equation (tree recursion).

It has been known for a long time that for
the anti-ferromagnetic Potts model there exists a unique TISGM \cite{Ro9}
and for the ferromagnetic Potts model
at sufficiently low temperatures there are at least $q+1$
translation-invariant Gibbs measures
\cite{Ga8}, \cite{Ga9}.

One of the $q+1$ well-known measures mentioned above is
obtained as infinite-volume limit of
the finite-dimensional Gibbs measures with free boundary condition and each of the remaining
$q$ measures is obtained as the corresponding limit with the boundary conditions of homogeneous
(constant) spin-configurations. While the $q$ measures
with homogeneous boundary conditions are always extremal in the set of all Gibbs measures \cite{Ga8}, \cite{Ga9},
for the free boundary condition measure there is a temperature, denoted by $T_0$, which is below the transition temperature, such that the measure is an extremal Gibbs measure if $T\geq T_0$ and loses its extremality for even lower temperatures \cite[Theorem 5.6.]{Ro}.

Recently, in \cite{KRK}  all TISGMs for the Potts model were found on the Cayley tree
of order $k\geq 2$, and it is shown that at sufficiently low
temperatures their number is  $2^{q}-1$. We note that the number of TISGMs does not depend on $k\geq 2$.
 In the case $k=2$ the explicit formulae for the
critical temperatures and all TISGMs are given.

In \cite{KR} some regions for the temperature ensuring that a given TISGM is (non-)extreme
 in the set of all Gibbs measures are found.
 In particular the existence of a temperature interval is shown for which there are at least
$2^{q-1} + q$ extremal TISGMs.

The fact that these measures can never be nontrivial convex combinations of each other
(i.e., they are extremal in the set of all TIGMs) is almost automatic (see  \cite[Theorem 2]{KRK}). However it is not clear what kind of boundary
conditions  are needed to get the remaining $2^q-q-2$ TISGMs as corresponding limits with the boundary conditions.
In this paper we shall answer this question. It is non-trivial problem since the number of TISGMs (i.e., $2^q-1$) is
larger than the number (i.e., $q$) of translation-invariant configurations. Therefore one expects to need
non-translation-invariant boundary conditions for some TISGMs. Concerning the Ising model,
the dependence of TISGMs on boundary conditions has been studied in \cite{YH}.

The paper is organized as follows. Section 2 contains preliminaries (necessary definitions
and facts). In section 3 we will show how to connect boundary laws with
boundary conditions, moreover we shall give the list of known TISGMs.
Section 4 contains our main result, namely given any  TISGM $\mu$, we show how to compute explicitly
a set of boundary conditions such that the limiting Gibbs measures with respect to these boundary conditions coincide with $\mu$. In the last section we construct concrete boundary conditions.

\section{Definitions}

Let $\Gamma^k= (V , L)$ be the regular Cayley tree, where each vertex has
$k + 1$ neighbors with $V$ being the set of vertices and $L$ the set of edges.

Two vertices  $t,s\in V,
(t\neq s)$ are called {\it neighbors} if they are connected by an edge. In this case we write
$\langle s, t \rangle$. Each vertex of $\Gamma^k$ has $k+1$ neighbors.

Fix an origin 0 of  $\Gamma^k$. We write $s\rightarrow t$,
if $t\neq s$ and the path connecting $0$ and $t$ passes  through $s$.
If $s\rightarrow t$ and $s$, $t$ are neighbors, then $t$
is called a {\it direct successor} of $s$ and this we write
as $s\rightarrow_{1} t$.

For any finite $A\subset V$, the boundary $\partial A$ of $A$ is
$$
\partial A=\{t\in V\setminus A: \exists x\in A, \langle
x,t\rangle\}.
$$

For every $A\subset V$, let $\Omega_A=\{1,2,...,q\}^A$ be the set of all possible spin
configurations on $A$. For brevity we write $\Omega$ instead of $\Omega_V$.

For every $A\subset V$ we define the $\sigma$-algebra $\textit{B}_A$ by
$$\textit{B}_A=\mbox{the} \, \sigma-\mbox{algebra generated by} \ \ \{X_t, t\in A\},$$
where
$X_t(\s)=\s(t)$ for all $t\in A, \s \in \Omega$. For brevity we write
$\textit{B}$ instead of  $\textit{B}_V$.

Let $A$ be a finite subset of $V$, $\omega\in\Omega$ and
$\s\in\Omega_A$. We define Potts interaction energy on $A$ given the inner
configuration $\s$ and the boundary condition $\omega$
by
\begin{equation}\label{1}
E_A^\omega(\s)=-J\sum_{{t,s\in A:\atop \langle
t,s\rangle}
}\delta_{\s(t)\s(s)}-J\sum_{{t\in A,\, s\in\partial A: \atop \langle
t,s\rangle} }\delta_{\s(t)\omega(s)},
\end{equation}
where $J\in \mathbb R$ and $\delta$ is the Kronecker's delta. 

A finite Gibbs measure $P_A^{\omega}$ on $\Omega_A$ corresponding to $E^{\omega}_A$ is defined by
\begin{equation}\label{2}
P_A^\omega(\s)=[Z_A^\omega]^{-1}\exp[-E_A^\omega(\s)], \s\in \Omega_A,
\end{equation}
where
$Z_A^\omega=\sum_{\widehat{\s}\in\Omega_A}\exp[-E_A^\omega(\widehat{\s})]$.
As usual $P_A^{\omega}$ can be considered as a probability measure on
$(\Omega, \textit{B})$.

For fixed $J$, if there is an increasing sequence of finite subsets $\{V_n\}$ such that
$V_n\nearrow V$ as $n\rightarrow\infty$ and
$P^\omega=\mbox{w}-\lim_{n\rightarrow\infty}P^\omega_{V_n}$
(the weak convergence of measures) exists for suitable fixed
$\omega\in\Omega$, then $P^\omega$ is called a \textit{limiting Gibbs measure}
with boundary condition $\omega$ for $J$. On the other hand, a Gibbs measure $P$ for $J$
is defined as a probability measure on  $(\Omega, \textit{B})$
such that for every $M$ in $\textit{B}_A$
\begin{equation}\label{3}
P(M|\textit{B}_{A^c})(\omega)=P_A^\omega(M). \ \ \ \ \ a.s.(P)
\end{equation}

It is known (\cite{Ge}, \cite{YH}) that the set $\Im(J)$ of all Gibbs measures for a fixed $J$ is a non-empty, compact convex set.
A limiting Gibbs measure is a Gibbs measure for the same $J$. Conversely, every extremal point of $\Im(J)$
is a limiting Gibbs measure
with a suitable boundary condition for the same $J$.
It is known (see page 241 of \cite{Ge}) that any extreme Gibbs measure of a Hamiltonian with nearest-neighbor interactions is a {\it splitting} Gibbs measure (which is equivalently called a tree-indexed Markov chain \cite{Ge}). Consequently,
any non-splitting Gibbs measure is not extreme. However, any splitting Gibbs measure (not necessary extreme) is a limiting Gibbs measure, because  it corresponds to a (generalized)\footnote{Adding a boundary field at each site of the boundary is called a generalized boundary condition \cite{GRRR} or boundary law \cite{Ge}}  boundary condition satisfying a compatibility (tree recursion) condition of Kolmogorov's theorem.
 In \cite{C} it was shown that for non-extremal Gibbs measures on $\mathbb Z^d$
a Gibbs measure need not be a limiting Gibbs measure (see \cite{C} and \cite{FV} for more details).

\section{Translation-invariant limiting Gibbs measures}

Let $|t|$ denote the distance between $0$ and  $t\in V$, i.e.
$|t|=n$ if there exists a chain
$0\rightarrow_1u_1\rightarrow_1u_2\rightarrow_1u_3\rightarrow_1...\rightarrow_1u_{n-1}\rightarrow_1t.$
We only consider the sequence of boxes
$$V_n=\{t\in V:|t|\leq n\}, n\geq 1.$$

For every $s\in V$ we define
$$\Gamma^k_s=\{s\}\cup\{t\in V:
s\rightarrow t\}, \ \ \mbox{and} \ \ V_{n,s}=\Gamma^k_s\cap V_n, \, n\geq 1.$$

Elements of the set $V_1\setminus\{0\}$ are the nearest neighbors of the origin $0$, 
since this set contains $k+1$ elements, we number them by $1, 2, \dots, k+1.$
We note that the subtrees $V_{n,i}$, $i=1,2,\dots,k+1$ are similar to each other, i.e.,
for any $i,j\in \{1,\dots,k+1\}$ the subtree $V_{n,i}$  can be obtained from the $V_{n,j}$ by a rotation around 0.
Moreover we have
\begin{equation}\label{Vv}
V_n=\{0\}\cup\bigcup_{i=1}^{k+1}V_{n,i}, \ \ \mbox{and} \ \  V_{n,i}=\{i\}\cup \bigcup_{j: i\rightarrow_{1} j}V_{n-1,j}.
\end{equation}
Using (\ref{Vv}),  from (\ref{1}) we get
\begin{equation}\label{1e}
E_{V_n}^\omega(\s)=\sum_{i=1}^{k+1}\left(E_{V_{n,i}}^\omega(\s)-J\delta_{\s(0)\s(i)}\right).
\end{equation}

For every $\omega\in \Omega, s\in V\setminus \{0\}$ and $n\geq|s|$ define
\begin{equation}\label{4}
W_{n,s}^\omega(l)=\sum_{\s \in \Omega_{V_{n,s}}:\, \s(s)=l}\exp[-E_{V_{n,s}}^\omega(\s)-J\delta_{l\omega(t)}], \ \
l=1,2,...,q,
\end{equation}
\begin{equation}\label{5}
R_{n,s}^l(\omega)=\frac{W_{n,s}^\omega(l)}{W_{n,s}^\omega(q)}, \ \ l=1,2,...,q,
\end{equation}
here $t$ is the unique vertex such that $t\rightarrow_1s$.

Now by (\ref{2}) and (\ref{1e})-(\ref{5}) we get
$$\frac{P_{V_n}^\omega(\s(0)=l)}{P_{V_n}^\omega(\s(0)=q)}={\sum_{\s:\s(0)=l}\exp\left(-E_{V_n}^\omega(\s)\right)\over
\sum_{\s:\s(0)=q}\exp\left(-E_{V_n}^\omega(\s)\right)}
$$
\begin{equation}\label{6}
=\prod_{i=1}^{k+1}\frac{(\exp(J)-1)R_{n,i}^l(\omega)+\sum_{p=1}^{q-1}R_{n,i}^p(\omega)+1}
{\exp(J)+\sum_{p=1}^{q-1}R_{n,i}^p(\omega)}, \ \ l=1,2,...,q,
\end{equation}
By (\ref{Vv}) and (\ref{4}) we obtain
\begin{equation}\label{7}
W_{n,s}^\omega(l)=\prod_{u:s\rightarrow_1u}[(\exp(J)-1)W_{n,u}^\omega(l)+\sum_{p=1}^q
W_{n,u}^\omega(p)], \ \ l=1,2,...,q,
\end{equation}
and for $n>m,\ \ \eta\in\Omega_{V_m},$ we get
\begin{equation}\label{8}
P_{V_n}^\omega(\{\s(s)=\eta(s), s\in
V_m\})=\frac{\exp[-E_{V_{m-1}}^\eta(\eta)]\prod_{s\in\partial
V_{m-1}}W_{n,s}^\omega(\eta(s))}{\sum_{\xi\in\Omega_{V_m}}\exp[-E_{V_{m-1}}^\xi(\xi)]\prod_{s\in\partial
V_{m-1}}W_{n,s}^\omega(\xi(s))}.
\end{equation}

From the above equalities we obtain the following

\begin{lemma}\label{l1} Let $\omega \in \Omega$ be given. If there is
$N>0$ such that $R_{n,s}^l(\omega)$ converges as
$n\rightarrow\infty$ for every $s\in V\setminus V_N$ and for every $l=1,...,q$, then
$P^\omega=\mbox{w}-\lim_{n\rightarrow\infty}P^\omega_{V_n}$ exists.
\end{lemma}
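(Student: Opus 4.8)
The plan is to reduce the assertion to the convergence of all finite-dimensional distributions and then to read these off from (\ref{8}). Since $\Omega=\{1,\dots,q\}^V$ is compact in the product topology and the local (cylinder) functions are dense in $C(\Omega)$, the sequence $\{P^\omega_{V_n}\}$ is relatively compact for the weak topology, and it converges weakly as soon as, for every $m\ge 1$ and every $\eta\in\Omega_{V_m}$, the cylinder probability $P^\omega_{V_n}(\{\sigma(s)=\eta(s),\ s\in V_m\})$ converges as $n\to\infty$; indeed any two weak limit points then agree on all local functions, hence coincide. So it suffices to prove convergence of these cylinder probabilities.

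First I would rewrite (\ref{8}) in terms of the ratios (\ref{5}). Fix $m$ and $\eta\in\Omega_{V_m}$ and take $n>m$. Each $W^\omega_{n,s}(q)$ is a sum of strictly positive Boltzmann weights, so it is positive; dividing both numerator and denominator of (\ref{8}) by $\prod_{s\in\partial V_{m-1}}W^\omega_{n,s}(q)$ and using (\ref{5}) gives
\[
P^\omega_{V_n}(\{\sigma(s)=\eta(s),\ s\in V_m\})=\frac{\exp[-E^\eta_{V_{m-1}}(\eta)]\prod_{s\in\partial V_{m-1}}R^{\eta(s)}_{n,s}(\omega)}{\sum_{\xi\in\Omega_{V_m}}\exp[-E^\xi_{V_{m-1}}(\xi)]\prod_{s\in\partial V_{m-1}}R^{\xi(s)}_{n,s}(\omega)}.
\]
This exhibits the cylinder probability as one fixed rational function of the finitely many numbers $R^l_{n,s}(\omega)$ with $s\in\partial V_{m-1}$, $l=1,\dots,q$, and the denominator, being a sum of strictly positive terms that includes the all-$q$ term $\exp[-E^\xi_{V_{m-1}}(\xi)]$ (for which every factor $R=1$), is bounded below by a positive constant uniformly in $n$.

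Now the set $\partial V_{m-1}$ is the sphere of radius $m$, so for $m\ge N+1$ it is contained in $V\setminus V_N$, and the hypothesis guarantees that every $R^l_{n,s}(\omega)$ appearing above converges as $n\to\infty$. Combined with the uniform positivity of the denominator, the rational function above converges to the same expression evaluated at the limits, so the cylinder probability converges for every $m\ge N+1$. For the remaining values $m\le N$ I would argue by marginalization: since $V_m\subset V_{N+1}$, any event in $\textit{B}_{V_m}$ lies in $\textit{B}_{V_{N+1}}$, whence $P^\omega_{V_n}(\{\sigma=\eta\text{ on }V_m\})$ is a finite sum of cylinder probabilities on $V_{N+1}$, each convergent by the previous step. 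This yields convergence of all cylinder probabilities and hence the existence of the weak limit. The two points requiring care are the passage from the weights $W$ to the ratios $R$ together with the uniform lower bound on the denominator (which is what lets the limit pass through the quotient), and the fact that the hypothesis constrains only vertices outside $V_N$; the latter gap is exactly what the marginalization step for small $m$ removes.
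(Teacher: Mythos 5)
Your proof is correct and follows essentially the same route the paper intends: the paper gives no explicit proof beyond ``From the above equalities we obtain the following,'' meaning that the lemma is read off from (\ref{8}) (rewritten via the ratios (\ref{5})) exactly as you do. Your write-up simply supplies the details the paper leaves implicit --- the reduction of weak convergence to cylinder probabilities, the uniform positive lower bound on the denominator coming from the constant-$q$ configuration, and the marginalization step handling $m\le N$.
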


  For $n\geq 1$, $p=1,2,...q$, $i=1,...,k+1$ we denote
$$A_n=\{t\in V: |t|=n\},  \ \ \textit{N}^{(p)}_{n,i}(\sigma)=|\{x\in A_n\cap
V_{n,i}:\sigma(x)=p\}|.$$

\begin{lemma}\label{l2} Let $l=1,\dots,q$ and $\omega$ be a configuration such that\footnote{The sum in the RHS of $c^l(\omega)$ is taken over all {\it direct successors} of $t$, it should not be confused with the sum over all neighbors of $t$, i.e., $\sum_{s:\langle t,s\rangle }$.}
$$c^l(\omega)=\sum_{s:t\rightarrow_1s}\delta_{l\omega(s)}$$ is independent of $t\in V\setminus\{0\}.$
Then $R^l_{n,i}(\omega)=R^l_{n,j}(\omega)$ for any $i,j=1,2,...,k+1.$
\end{lemma}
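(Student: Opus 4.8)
The plan is to prove the stronger statement that, under the stated hypothesis, $R_{n,s}^l(\omega)$ depends on the vertex $s$ only through its distance $|s|$ from the origin; the assertion $R_{n,i}^l(\omega)=R_{n,j}^l(\omega)$ for $i,j\in\{1,\dots,k+1\}$ is then the special case $|s|=1$. I would establish this by backward induction on $|s|$, running from the outermost level $|s|=n$ down to $|s|=1$, using the ratio form of the recursion \eqref{7}.

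First I would record the ratio recursion obtained by forming the quotient of the two instances of \eqref{7} for $l$ and for $q$ and dividing numerator and denominator of each factor by $W_{n,u}^\omega(q)$:
\[
R_{n,s}^l(\omega)=\prod_{u:\,s\rightarrow_1 u}\frac{(\exp(J)-1)R_{n,u}^l(\omega)+\sum_{p=1}^{q-1}R_{n,u}^p(\omega)+1}{\exp(J)+\sum_{p=1}^{q-1}R_{n,u}^p(\omega)}.
\]
The inductive step is then essentially automatic, because every vertex $s\neq 0$ has exactly $k$ direct successors, so the right-hand side depends on $s$ only through the vectors $(R_{n,u}^p(\omega))_{p}$ of its $k$ successors $u$; if, by the induction hypothesis, these vectors coincide for all vertices at level $|s|+1$, then the product returns one and the same value for every $s$ at level $|s|$.

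The base case, at the leaves $|s|=n$, is where the hypothesis on $c^l(\omega)$ enters and is the only delicate point. Here $V_{n,s}=\{s\}$, and I would evaluate $W_{n,s}^\omega(l)$ directly from \eqref{4}: the term $-J\delta_{l\omega(t)}$ attached to the parent edge in \eqref{4} cancels the identical parent-edge contribution sitting inside $E_{V_{n,s}}^\omega$, leaving only the edges from $s$ to its successors at level $n+1$, so that $W_{n,s}^\omega(l)=\exp\big[J\sum_{v:\,s\rightarrow_1 v}\delta_{l\omega(v)}\big]=\exp[J\,c^l(\omega)]$. By the hypothesis this value is independent of the leaf $s$, whence $R_{n,s}^l(\omega)=\exp[J(c^l(\omega)-c^q(\omega))]$ is the same for all leaves, which launches the induction.

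The main obstacle is precisely the verification that the boundary configuration $\omega$ influences the recursion only through the outermost level: at every interior vertex the parent-edge field term must telescope away exactly as in the leaf computation, so that $\omega$ never reappears in the bulk factors of \eqref{7}. Once this cancellation is in hand, $c^l(\omega)$ is identified as the sole carrier of the boundary data, and the assumption that it is independent of $t$ is exactly what makes the base case homogeneous. I note that for a fixed $n$ only the constancy of $c^l(\omega)$ over level-$n$ vertices is actually invoked, while the full hypothesis over all $t\in V\setminus\{0\}$ supplies this uniformly for every $n$, as is needed when passing to the limit through Lemma \ref{l1}.
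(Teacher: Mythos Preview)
Your argument is correct and takes a genuinely different route from the paper's proof. The paper argues \emph{globally}: it decomposes the energy as $E^{\omega}_{V_{n,i}}(\sigma)=E^{\sigma}_{V_{n-1,i}}(\sigma)-J\sum_{p=1}^{q}N^{(p)}_{n,i}(\sigma)\,c^{p}(\omega)$, and then uses a rotation $\gamma$ of the tree carrying $V_{n,i}$ onto $V_{n,j}$ to set up a bijection $\Omega_{V_{n,i}}\to\Omega_{V_{n,j}}$ that preserves both the internal energy and the occupation counts $N^{(p)}$; this yields $W_{n,i}^{\omega}(l)=W_{n,j}^{\omega}(l)$ directly. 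You instead run a \emph{local}, level-by-level backward induction on the ratio recursion derived from \eqref{7}, starting from the leaf identity $W_{n,s}^{\omega}(l)=\exp[J\,c^{l}(\omega)]$. Your telescoping of the parent-edge term is exactly right, and your base computation is the correct reading of \eqref{4}. What your approach buys is that it proves, in one pass, the stronger statement $R_{n,s}^{l}(\omega)=R_{n-|s|+1}^{l}(\omega)$ that the paper records separately as \eqref{9'}; the paper's approach, on the other hand, makes the role of the tree's homogeneity explicit and shows equality of the $W$'s themselves rather than just their ratios. One small point worth stating more explicitly in your write-up: the inductive step requires the hypothesis for \emph{all} $l\in\{1,\dots,q\}$ simultaneously (so the full vectors $(R_{n,u}^{p})_{p}$ agree at the next level), which is indeed how the lemma is phrased.
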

\begin{proof} Since
$R_{n,i}^l(\omega)=\frac{W_{n,i}^\omega(l)}{W_{n,i}^\omega(q)},$
it  suffices to prove that
$W_{n,i}^\omega(l)=W_{n,j}^\omega(l)$ for any
$i,j=1,2,...,k+1.$

For the Hamiltonian we have

\begin{equation}\label{*}
E^{\omega}_{V_{n,i}}(\sigma)=E^{\sigma}_{V_{n-1,i}}(\sigma)-J\sum_{x\in
A_n\cap V_{n,i}}\sum_{x\rightarrow_1
y}\delta_{\sigma(x)\omega(y)}.
\end{equation}

By the condition of lemma \ref{l2} we obtain
$$
E^{\omega}_{V_{n,i}}(\sigma)=E^{\sigma}_{V_{n-1,i}}(\sigma)-J\sum_{x\in
A_n\cap
V_{n,i}}c^{\sigma(x)}(\omega)=E^{\sigma}_{V_{n-1,i}}(\sigma)-J\sum_{p=1}^q
\textit{N}^{(p)}_{n,i}(\sigma)c^{p}(\omega).
$$

For  $W_{n,i}^\omega(l)$ we have
$$
W_{n,i}^\omega(l)=\sum_{\sigma\in
V_{n,i}:\sigma(i)=l}\exp[E^{\sigma}_{V_{n-1,i}}(\sigma)-J\sum_{p=1}^q
\textit{N}^{(p)}_{n,i}(\sigma)c^{p}(\omega)-J\delta_{l\omega(t)}].
$$

Since $V_{n,i}$ is similar to $V_{n,j}$, for any $i,j\in \{1,\dots,k+1\}$, $n\geq 1$, there is an one-to-one correspondence $\gamma$ between
sets $\Omega_{V_{n,i}}$ and $\Omega_{V_{n,j}}$, which can be obtained by a rotation of the $V_{n,i}$ on the set
$V_{n,j}$. We note that the Potts interaction energy (\ref{1}) is
translation-invariant and by the condition of the lemma the quantity
$c^l(\omega)$ also does not depend on vertices of the tree. Therefore, if $\gamma(\sigma)=\varphi$ then
$$E^{\sigma}_{V_{n-1,i}}(\sigma)=E^{\gamma(\sigma)}_{V_{n-1,j}}(\gamma(\sigma))=E^{\varphi}_{V_{n-1,j}}(\varphi),$$
$$\textit{N}^{(p)}_{n,i}(\sigma)=\textit{N}^{(p)}_{n,j}(\varphi), \ \ \forall i,j=1,\dots,k+1, \, p=1,\dots,q.$$
Using these equalities we get
$$
W_{n,i}^\omega(l)=\sum_{\sigma\in
\Omega_{V_{n,i}}:\sigma(i)=l}\exp[E^{\sigma}_{V_{n-1,i}}(\sigma)-J\sum_{p=1}^q
\textit{N}^{(p)}_{n,i}(\sigma)c^{p}(\omega)-J\delta_{l\omega(t)}]=
$$ $$
\sum_{\varphi\in
\Omega_{V_{n,j}}:\varphi(j)=l}\exp[E^{\varphi}_{V_{n-1,j}}(\varphi)-J\sum_{p=1}^q
\textit{N}^{(p)}_{n,j}(\varphi)c^{p}(\omega)-J\delta_{l\omega(t)}]=W_{n,j}^\omega(l).
$$
Thus $R^l_{n,i}(\omega)=R^l_{n,j}(\omega)$  for any $i,j=1,2,...,k+1.$
\end{proof}
From the above proof it follows that $R_{n,s}^l(\omega)$ depends only on $n-|s|$, i.e., we have
\begin{equation}\label{9'}
R_{n,s}^l(\omega)=R_{n-|s|+1}^l(\omega), \ \ l=1,2,...q.
\end{equation}
Then from (\ref{7}) we get the following
\begin{equation}\label{9}
Y_n^l(\omega)=kF_l(Y_{n-1}^1(\omega),
Y_{n-1}^2(\omega),...,Y_{n-1}^{q-1}(\omega)),
\end{equation}
where $l=1,...,q-1$, $n\geq 2$, $Y_n^l(\omega)=\ln R_n^l(\omega)$ and
$F=(F_1,...,F_{q-1})$ with coordinates
\begin{equation}\label{10}
F_l(x^1, x^2,...,x^{q-1})=\ln
\frac{(\exp{(J)}-1)\exp{(x^l)}+\sum_{p=1}^{q-1}\exp{(x^p)}+1}{\exp{(J)}+\sum_{p=1}^{q-1}\exp{(x^p)}}.
\end{equation}

It is clear that if $Y^i(\omega)$ is a limit point of $Y^i_n(\omega)$, as $n\to\infty$ then by (\ref{9}) we get
\begin{equation}\label{11}
Y^l(\omega)=kF_l(Y^1(\omega), Y^2(\omega),...,Y^{q-1}(\omega)), \ \ l=1,2,..,q-1.
\end{equation}

For convenience we denote
$$\t=\exp(J), \ \ h_l=Y^l(\omega), \ \ l=1,2,...q-1.$$
Then the system (\ref{11}) becomes
\begin{equation}\label{pt}
h_i=k\ln\left({(\theta-1)e^{h_i}+\sum_{j=1}^{q-1}e^{h_j}+1\over
\theta+ \sum_{j=1}^{q-1}e^{h_j}}\right),\ \ i=1,\dots,q-1.
\end{equation}

In \cite{KRK} it is proven that to each solution of (\ref{pt})
corresponds a unique Gibbs measure which is called a translation-invariant splitting Gibbs measure (TISGM).

In \cite{KRK} all solutions of the equation (\ref{pt}) are given.
By these solutions the full set of TISGMs is described.
In particular, it is shown that any TISGM of the Potts model corresponds
to a solution of the following equation
\begin{equation}\label{rm}
h=f_m(h)\equiv k\ln\left({(\theta+m-1)e^h+q-m\over me^h+q-m-1+\theta}\right),
\end{equation}
for some $m=1,\dots,q-1$.

Denote
\begin{equation}\label{tm}
\theta_m=1+2\sqrt{m(q-m)}, \ \ m=1,\dots,q-1.
\end{equation}
It is easy to see that
\begin{equation}\label{st}
\theta_m=\theta_{q-m} \ \ \mbox{and} \ \ \theta_1<\theta_2<\dots<\theta_{\lfloor{q\over 2}\rfloor-1}<\theta_{\lfloor{q\over 2}\rfloor}\leq q+1.
\end{equation}

\begin{pro}\label{pw}\cite{KRK} Let $k=2$, $J>0$.
\begin{itemize}
\item[1.]
If $\theta<\theta_1$ then there exists a unique TISGM;
\item[2.]
If $\theta_{m}<\theta<\theta_{m+1}$ for some $m=1,\dots,\lfloor{q\over 2}\rfloor-1$ then there are  $1+2\sum_{s=1}^m{q\choose s}$ TISGMs which correspond  to the solutions $h_i\equiv h_i(\theta, s)=2\ln[x_i(s,\theta)]$, $i=1,2$  $s=1,\dots,m$ of (\ref{rm}), where
\begin{equation}\label{s}\begin{array}{ll}
x_{1}(s,\theta)={\theta-1-\sqrt{(\theta-1)^2-4s(q-s)}\over 2s},\\[2mm] x_{2}(s,\theta)={\theta-1+\sqrt{(\theta-1)^2-4s(q-s)}\over 2s}.
\end{array}
\end{equation}

\item[3.] If $\theta_{\lfloor{q\over 2}\rfloor}<\theta\ne q+1$ then there are $2^q-1$ TISGMs;

\item[4] If $\theta=q+1$ the number of TISGMs is as follows
$$\left\{\begin{array}{ll}
2^{q-1}, \ \ \mbox{if} \ \ q \ \ \mbox{is odd}\\[2mm]
2^{q-1}-{q-1\choose q/2}, \ \ \mbox{if} \ \ q \ \ \mbox{is even;}
\end{array}\right.$$

\item[5.] If $\theta=\theta_m$, $m=1,\dots,\lfloor{q\over 2}\rfloor$, \,($\theta_{\lfloor{q\over 2}\rfloor}\ne q+1$) then the number of TISGMs is
$$1+{q\choose m}+2\sum_{s=1}^{m-1}{q\choose s}.$$
\end{itemize}
\end{pro}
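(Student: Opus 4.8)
The plan is to count directly the solutions of the fixed-point system (\ref{pt}), since by \cite{KRK} these are in bijection with the TISGMs. The decisive first step is to show that every solution has a rigid ``two-value'' structure. Writing $k=2$ and setting $x_i=e^{h_i/2}$, the $i$-th equation of (\ref{pt}) becomes $x_i=\frac{(\theta-1)x_i^2+S+1}{\theta+S}$ with $S=\sum_{j=1}^{q-1}x_j^2$; that is, each $x_i$ is a positive root of the single quadratic $(\theta-1)x^2-(\theta+S)x+(S+1)=0$, whose coefficients depend on the configuration only through the common sum $S$. Since $x=1$ is always a root of this quadratic, every $x_i$ lies in $\{1,r\}$ with $r=\frac{S+1}{\theta-1}$ the second root. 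Thus a solution is completely described by the subset $T\subseteq\{1,\dots,q-1\}$ of coordinates carrying the value $r$ together with that value $r$.

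Next I would close the self-consistency loop. If $|T|=m$, then $S=mr^2+(q-1-m)$, and substituting into $r=\frac{S+1}{\theta-1}$ gives
\[
m\,r^2-(\theta-1)r+(q-m)=0 .
\]
Its positive roots are exactly $x_1(m,\theta)$ and $x_2(m,\theta)$ of (\ref{s}); equivalently, clearing denominators in the one-variable equation (\ref{rm}) yields the cubic $m x^3-(\theta+m-1)x^2+(q-m-1+\theta)x-(q-m)$, which factors as $(x-1)\bigl(m x^2-(\theta-1)r+(q-m)\bigr)$, so the nontrivial solutions of $h=f_m(h)$ correspond precisely to roots $x\neq1$ of this quadratic. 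This identifies the index $m$ in (\ref{rm}) as the number of elevated coordinates.

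The count now reduces to elementary root-counting for the quadratic as $\theta$ varies. Its discriminant is $(\theta-1)^2-4m(q-m)$, which is nonnegative exactly when $\theta\geq\theta_m$ with $\theta_m$ as in (\ref{tm}); moreover $x=1$ solves the quadratic iff $\theta=q+1$, and its (possibly double) root equals $1$ only when $m=q-m$. Hence for a fixed $m$ the number $\rho(m)$ of admissible values $r\neq1$ is $0$ for $\theta<\theta_m$, equals $1$ for $\theta=\theta_m$ or for $\theta=q+1$ (in both cases provided $m\neq q/2$), and equals $2$ for $\theta>\theta_m$ with $\theta\neq q+1$. Each admissible $r$ may be placed on any of the $\binom{q-1}{m}$ subsets $T$ of size $m$, and distinct pairs $(T,r)$ give distinct solution vectors; adding the unique trivial solution $h\equiv0$ gives the master formula
\[
\#\{\text{TISGMs}\}=1+\sum_{m=1}^{q-1}\rho(m)\binom{q-1}{m}.
\]
Specializing $\rho(m)$ to each temperature window, and using the monotonicity (\ref{st}) to decide for which $m$ one has $\theta>\theta_m$, yields items 1--5.

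The routine but genuinely fiddly part—and the step I expect to be the main obstacle—is the bookkeeping that recasts the asymmetric count $\binom{q-1}{m}$ into the symmetric expressions of the statement. Here I would invoke the symmetry $\theta_s=\theta_{q-s}$ from (\ref{st}) so that the windows $s\leq m$ and $s\geq q-m$ contribute together, and then collapse the two resulting sums by Pascal's rule $\binom{q}{s}=\binom{q-1}{s}+\binom{q-1}{s-1}$; this turns $1+2\bigl(\sum_{s=1}^{m}\binom{q-1}{s}+\sum_{s=0}^{m-1}\binom{q-1}{s}\bigr)$ into $1+2\sum_{s=1}^{m}\binom{q}{s}$ for item 2, and similarly produces the forms in items 3--5, where the degenerate double-root cases ($\theta=\theta_m$) and the coincidence $r=1$ at $\theta=q+1$ (responsible for the even-$q$ correction $\binom{q-1}{q/2}$ and for excluding $m=q/2$) must be tracked with care.
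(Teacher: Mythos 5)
Your argument is correct, and there is nothing in the paper itself to compare it against: the proposition is imported verbatim from \cite{KRK} and never proved in this paper. Your route --- each $x_i=e^{h_i/2}$ solves the quadratic $(\theta-1)x^2-(\theta+S)x+(S+1)=0$, whose roots are $1$ and $r=(S+1)/(\theta-1)$, forcing the two-value structure; self-consistency for $|T|=m$ elevated coordinates then gives $mr^2-(\theta-1)r+(q-m)=0$; and the master count $1+\sum_{s=1}^{q-1}\rho(s)\binom{q-1}{s}$ collapses to items 1--5 via the symmetry $\theta_s=\theta_{q-s}$ and Pascal's rule --- is in substance the proof of \cite{KRK} specialized to $k=2$ (for general $k\geq 2$ the two-value structure there comes from Descartes' rule of signs applied to $(\theta-1)x^k-(\theta+S)x+(S+1)$, of which your quadratic is the $k=2$ instance). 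I checked the bookkeeping in all five items, including the degenerate cases you flag ($\rho(s)=1$ at $\theta=\theta_s$ or at $\theta=q+1$ when $s\neq q/2$, and $\rho(q/2)=0$ at $\theta=q+1$), which indeed produce the $\binom{q}{m}$ term in item 5 and the even-$q$ correction $\binom{q-1}{q/2}$ in item 4; the only slip is typographical --- the quadratic factor should read $mx^2-(\theta-1)x+(q-m)$, with $x$ rather than $r$.
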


The number of TISGMs does not depend on $k\geq 2$ see \cite[Theorem 1]{KRK}. But for $k\geq 3$ explicit formulas for the solutions are not known.
Therefore in this paper we consider only the case $k=2$.

Following \cite{KRK} we note that each TISGM corresponds to a solution of (\ref{rm}) with some $m\leq \lfloor{q\over 2}\rfloor$.
Moreover, for a given $m\leq \lfloor{q\over 2}\rfloor$,
a fixed solution $h_i(\theta,m)$ to (\ref{rm}) generates ${q\choose m}$ vectors by permuting coordinates of the vector $(\underbrace{h_i,h_i,\dots,h_i}_m,\underbrace{0,0,\dots,0}_{q-m})$ and giving ${q\choose m}$ TISGMs.
Thus without loss of generality we can only consider
the measure $\mu_i(\theta,m)$ corresponding to vector ${\bf h}(m,i)=(\underbrace{h_i,h_i,\dots,h_i}_m,\underbrace{0,0,\dots,0}_{q-m-1})$, i.e., normalized on $q$th coordinate (see Remark 2 and Corollary 2 of \cite{KRK}).
Denote by $\mu_0\equiv\mu_0(\theta)$ the TISGM corresponding to solution $h_i\equiv 0$ and by $\mu_i\equiv \mu_i(\theta,m)$ the TISGM corresponding to the solution $h_i(\theta, m)$, $i=1,2$, $m=1,\dots, \lfloor{q\over 2}\rfloor$ (given in Proposition \ref{pw}).
In this paper our aim is to obtain measures $\mu_i$ by changing boundary conditions.

\section{Boundary conditions for TISGMs}

The following lemma can be proved by simple analysis.

\begin{lemma}\label{l3} \begin{itemize}
\item[i.] For $k\geq 2$ and $\theta>1$ the function $f_m(h)$, $h\in \mathbb R$ defined in (\ref{rm}) has the following properties:
\begin{itemize}

\item[a)] $\{h: f_m(h)=h\}=\{0, h_1, h_2\}$, if $\theta>\theta_m, m\leq[ q/2]$;

\item[b)] $a<f_m(h)<A, \ \ \mbox{with} \ \ a=k\ln{q-m\over q+\theta-m-1}, \ \ A=k\ln{\theta+m-1\over m}$;

\item[c)] ${d\over dh}f_m(h)={k(\theta-1)(\theta+q-1)e^h\over (me^h+\theta+q-m-1)((\theta+m-1)e^h+q-m)}>0$;
\end{itemize}

\item[ii.] If $k=2$ and $m\leq q/2$ then for solutions $h_1$ and $h_2$ mentioned in Proposition \ref{pw} the following statements hold
$$\begin{array}{llll}
0<h_1=h_2, \ \ \mbox{if} \ \ \theta=\theta_m\\[2mm]
0<h_1<h_2, \ \ \mbox{if} \ \ \theta_m<\theta<\theta_c, \ \ \mbox{with}\ \   \theta_c=q+1\\[2mm]
0=h_1<h_2, \ \ \mbox{if} \ \ \theta=q+1\\[2mm]
h_1<0<h_2, \ \ \mbox{if} \ \ q+1<\theta.
\end{array}$$
\end{itemize}
\end{lemma}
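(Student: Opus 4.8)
The plan is to treat the three assertions of part (i) in the order (c), (b), (a), since each relies on the previous, and then to read off part (ii) from the explicit root formulas. For (c) I would simply differentiate $f_m$ written as a difference of two logarithms, $f_m(h)=k[\ln((\theta+m-1)e^h+q-m)-\ln(me^h+\theta+q-m-1)]$, put the two resulting fractions over the common denominator displayed in the statement, and observe that the coefficient of $e^{2h}$ in the numerator cancels; a short expansion (conveniently with $u=\theta-1$) collapses the constant term so that the numerator is exactly $k(\theta-1)(\theta+q-1)e^h$. Since $\theta>1$ and $q\ge 2$ force every factor to be strictly positive, $f_m'>0$ follows at once.

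Given (c), part (b) is immediate: $f_m$ is continuous and strictly increasing, so its image is the open interval bounded by its limits at $\pm\infty$. Letting $e^h\to 0$ gives the lower bound $a=k\ln\frac{q-m}{q+\theta-m-1}$, and letting $e^h\to\infty$ gives the upper bound $A=k\ln\frac{\theta+m-1}{m}$; strictness of the inequalities is precisely the fact that these limiting values are never attained.

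For (a) I would first check that $h=0$ is always a fixed point, which is transparent, since at $h=0$ both the numerator and the denominator of (\ref{rm}) equal $\theta+q-1$. To locate the remaining fixed points in the case $k=2$ I would substitute $x=e^{h/2}$, so that $h=f_m(h)$ becomes the cubic $mx^3-(\theta+m-1)x^2+(\theta+q-m-1)x-(q-m)=0$; dividing out the known root $x=1$ leaves $mx^2-(\theta-1)x+(q-m)=0$, whose discriminant is $(\theta-1)^2-4m(q-m)$. This is positive exactly when $\theta>1+2\sqrt{m(q-m)}=\theta_m$, producing two further positive roots $x_1<x_2$ and hence the two nonzero fixed points $h_i=2\ln x_i$, in agreement with (\ref{s}) for $s=m$; together with $h=0$ this gives the asserted set $\{0,h_1,h_2\}$. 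For general $k\ge 2$ the same count follows from Proposition \ref{pw} (i.e.\ from \cite{KRK}), or alternatively from a shape argument: $g(h)=f_m(h)-h$ satisfies $g(-\infty)=+\infty$, $g(+\infty)=-\infty$, while $g'=f_m'-1$ changes sign at most twice because $f_m'$, as a function of $z=e^h$, is of the form $C_1 z/((az+b)(cz+d))$ with positive constants and thus has a single maximum; hence $g$ has at most three zeros.

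Part (ii) is then a study of the two roots of $mx^2-(\theta-1)x+(q-m)=0$ as $\theta$ varies, using $x_1x_2=(q-m)/m$ and $x_1+x_2=(\theta-1)/m$. The key observations are that $x=1$ solves the quadratic exactly when $\theta=q+1$, that $x_1$ is strictly decreasing and $x_2$ strictly increasing in $\theta$ (differentiate the closed forms), and that $m\le q/2$ gives $(q-m)/m\ge 1$. Evaluating at the endpoints, namely $x_1=x_2=\sqrt{(q-m)/m}$ at $\theta=\theta_m$ and $\{x_1,x_2\}=\{1,(q-m)/m\}$ at $\theta=q+1$, and comparing each $x_i$ with $1$ through the sign of $h_i=2\ln x_i$, yields the four cases exactly as stated. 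The main obstacle, and the only step demanding genuine care, is the exact-count claim in (a) for general $k\ge 2$, where no closed form for the roots is available and one must invoke the monotonicity/unimodality argument (or \cite{KRK}); the boundary coincidence $\theta_m=q+1$ when $m=q/2$ (for even $q$) also needs a brief separate check. Everything else is the ``simple analysis'' promised in the text.
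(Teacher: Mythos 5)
The paper offers no proof of this lemma at all---it is dismissed with ``can be proved by simple analysis''---and your argument is a correct realization of exactly that analysis: the derivative computation collapsing the numerator to $k(\theta-1)(\theta+q-1)e^h$, the limits at $\pm\infty$ giving $a$ and $A$, the substitution $x=e^{h/2}$ with the factorization $mx^3-(\theta+m-1)x^2+(\theta+q-m-1)x-(q-m)=(x-1)\bigl(mx^2-(\theta-1)x+(q-m)\bigr)$ reproducing (\ref{s}), and the root--location analysis via $p(1)=q+1-\theta$ all check out against the paper's displayed formulas. You also correctly identify the only two genuinely delicate points, which the paper itself glosses over: the exact three-fixed-point count for general $k\geq 3$ (where $\theta_m=1+2\sqrt{m(q-m)}$ is the $k=2$ threshold and one must fall back on the at-most-three bound plus \cite{KRK}), and the degenerate case $m=q/2$ with $q$ even, where $\theta_m=q+1$ and $h_1=h_2=0$ rather than $0<h_1=h_2$.
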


For each solution $h_i(\theta,m)$ we want to find
$\omega=\omega(h_i)\in \Omega$, such that
$\mu_i(\theta,m)=P^\omega$, where $P^\omega$ is defined in Lemma \ref{l1}.

Consider the dynamical system (\ref{9}) for $k=2$. Denote $G(h)=2F(h)$. For a given initial vector
$v^{(0)}=(v^{(0)}_1,\dots,v^{(0)}_{q-1})$, we shall study the limit
\begin{equation}\label{li}
\lim_{n\rightarrow \infty}G^{(n)}(v^{(0)}),
\end{equation}
here $G^{(n)}(v)=\underbrace{G(G(...G(v))...)}_n.$

Figures \ref{ff1}-\ref{ff4} show the streamlines of the vector field $G^{(n)}(v)$ for $k=2$, $q=3$.
These figures also illustrate the limit points of (\ref{li}).
\begin{figure}[h]
\centering
\includegraphics[width=9cm]{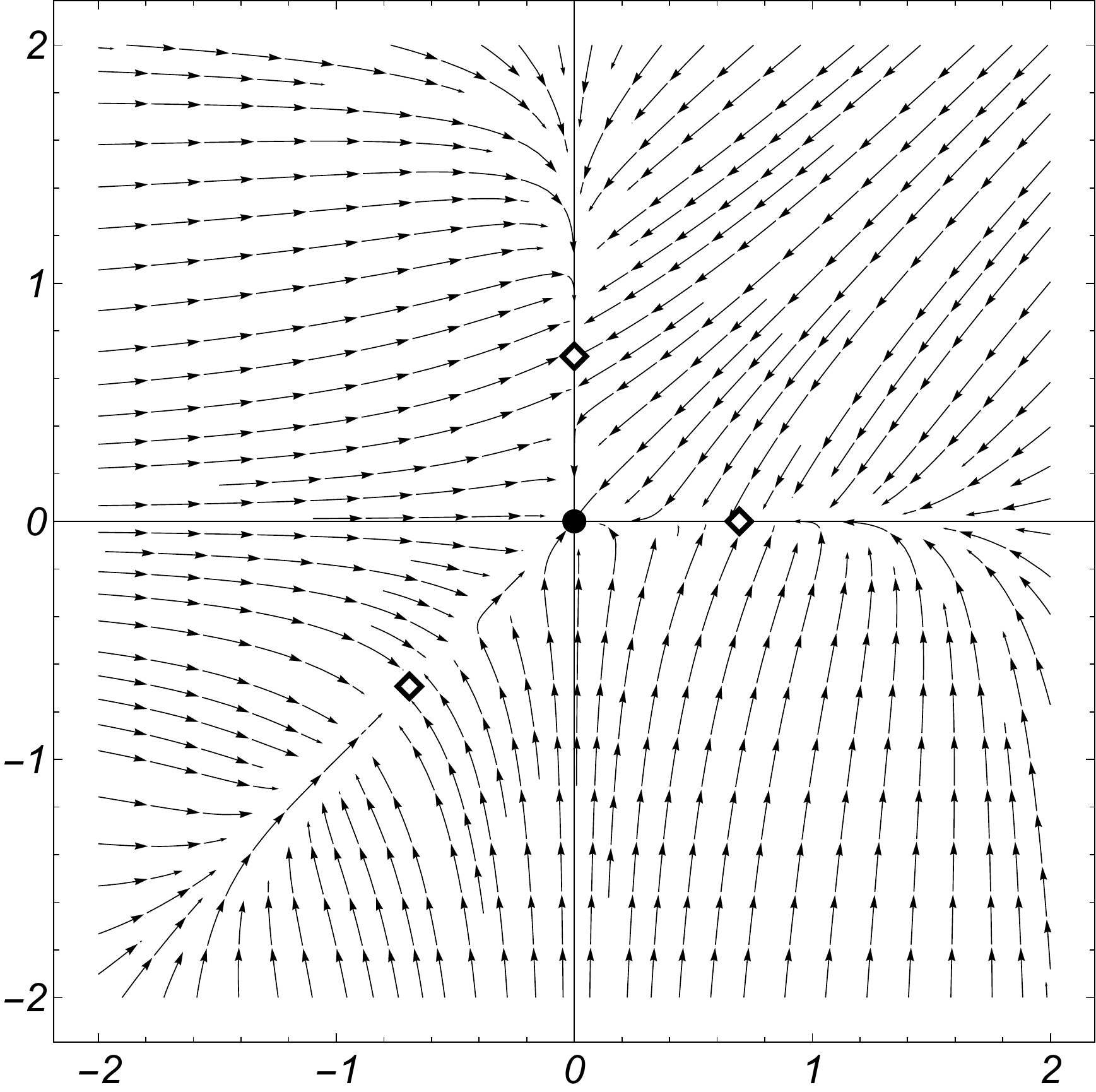}
\caption{Here and in Figures \ref{ff2}-\ref{ff4} the streamlines of the vector field $G^{(n)}(v)$ for $k=2$, $q=3$ are shown.
These figures also illustrate the limit points of (\ref{li}).  The plane is formed by a horizontal $v_1$-axis, and a vertical $v_2$-axis.  This figure applies for the case   $\theta=\theta_{1}=1+2\sqrt{2}$. Four fixed points. The origin is an attractor. There are 3 saddle fixed
points.}
\label{ff1}
\includegraphics[width=9cm]{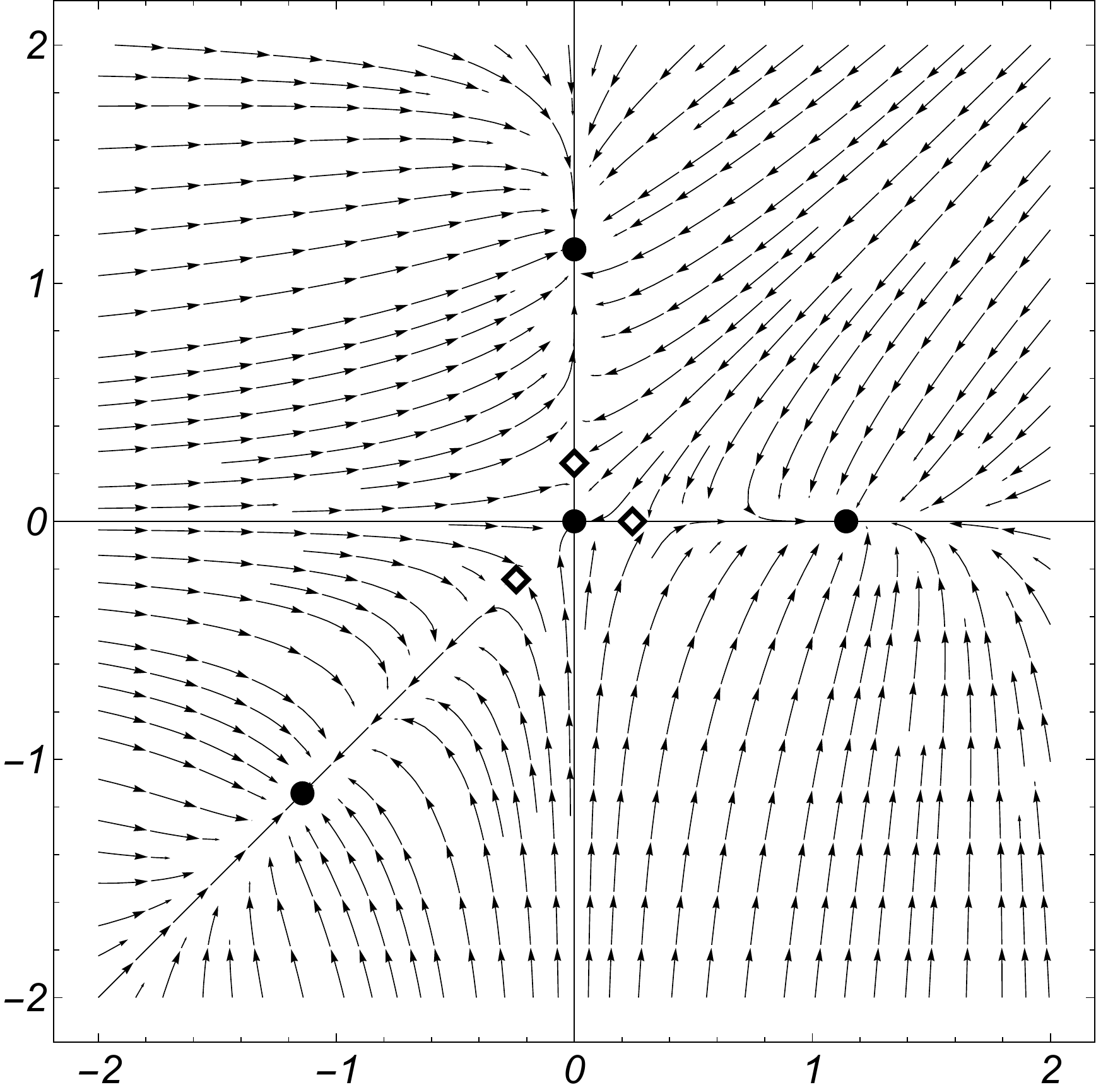}
\caption{$\theta=3.9>\theta_{1}$. Seven fixed points. Four of them (black dots) are attractors. Three (rectangular dots) points
are saddles coming from the saddle points of Fig.\ref{ff1}.}
\label{ff2}
\end{figure}
\begin{figure}
\centering
\includegraphics[width=9cm]{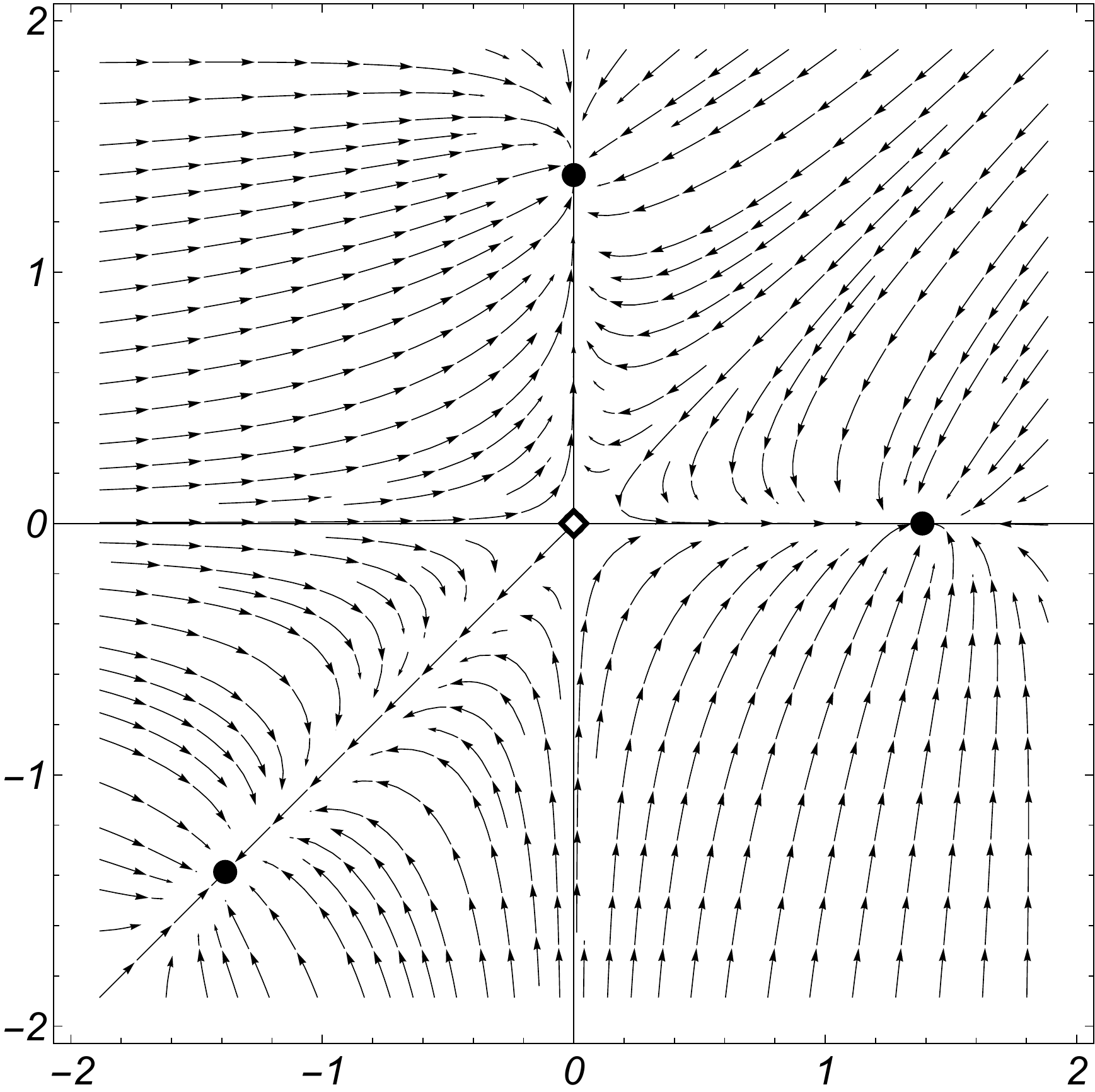}
\caption{$\theta=4$. Four fixed points. The origin is a repeller  point. Other fixed points are attractors.}
\label{ff3}
\includegraphics[width=9cm]{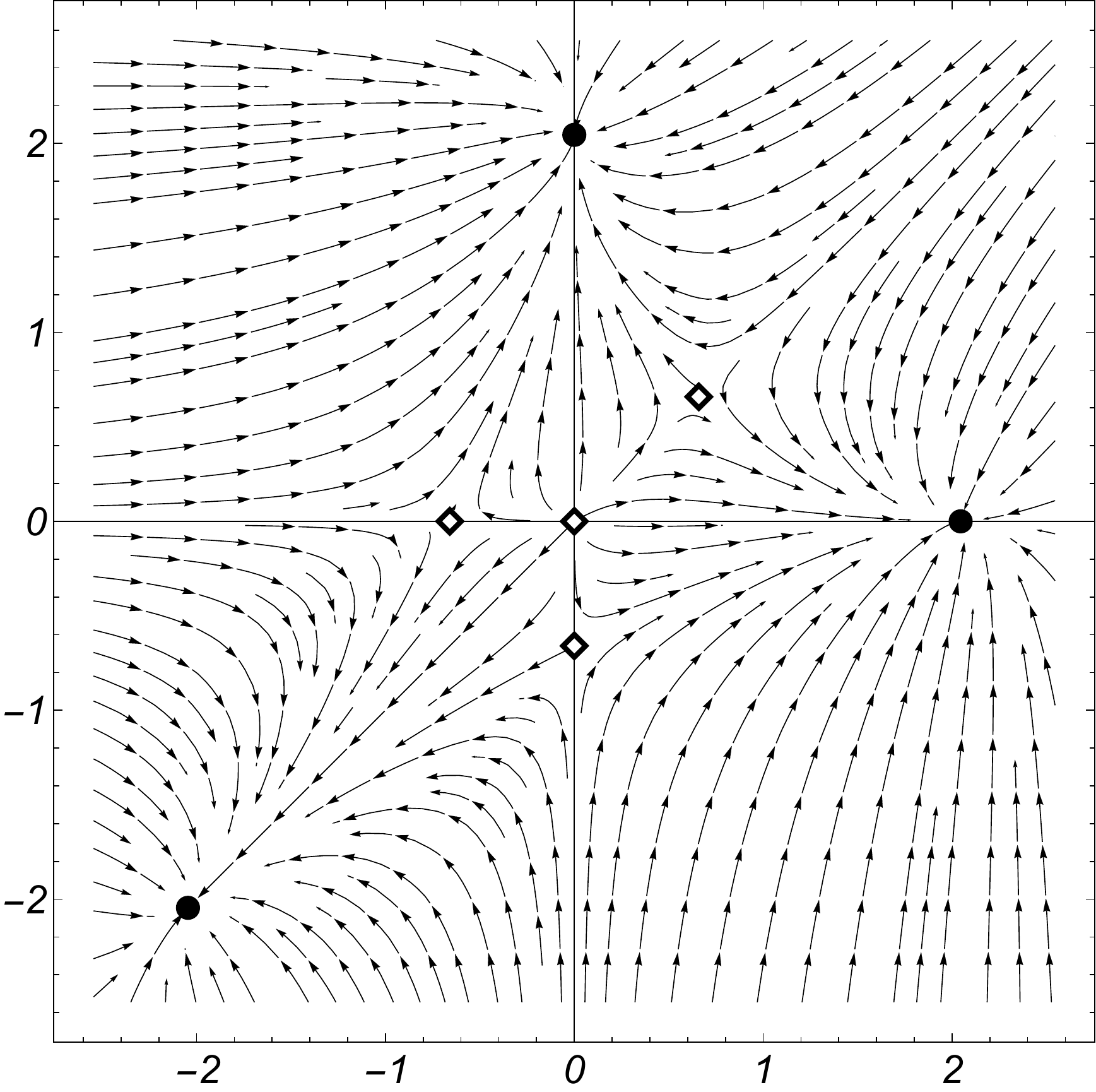}
\caption{$\theta=4.5$. Seven fixed points. The origin is a repeller,  other rectangular  dots
are saddles. The black dots are attractors.}
\label{ff4}
\end{figure}

Denote
$$I_m=\{v\in \mathbb R^{q-1}: v_1=\dots=v_m, \ \ v_{m+1}=\dots=v_{q-1}=0\}.$$
It is easy to see that the set $I_m$ is invariant with respect to $G$, i.e. $G(I_m)\subset I_m$.

The following lemma gives the limits of (\ref{li}) on the invariant $I_m$ (compare with Fig.\ref{ff1}-\ref{ff4}).
\begin{lemma}\label{l4} 1) If $\theta=\theta_m$, for some $m=1,\dots,\lfloor{q\over 2}\rfloor$ then
\begin{equation}\label{14a}
\lim_{n\rightarrow \infty}G^{(n)}(v^{(0)})=\left\{%
\begin{array}{ll}
    {\bf h}(m,1), & \hbox{\textit{if}}\ \ v^{(0)}\in I_m \ \ \mbox{and} \ \ v^{(0)}_1\geq h_1\\[2mm]
    (0,\dots, 0), & \hbox{\textit{if}}\ \ v^{(0)}\in I_m \ \ \mbox{and} \ \ v^{(0)}_1<h_1
    \end{array}%
\right.
\end{equation}

2) If $\theta_m<\theta<\theta_c=q+1$ then

\begin{equation}\label{14b}
\lim_{n\rightarrow \infty}G^{(n)}(v^{(0)})=\left\{%
\begin{array}{ll}
    {\bf h}(m,2), & \hbox{\textit{if}}\ \ v^{(0)}\in I_m \ \ \mbox{and} \ \ v^{(0)}_1> h_1\\[2mm]
    {\bf h}(m,1), & \hbox{\textit{if}}\ \ v^{(0)}\in I_m \ \ \mbox{and} \ \ v^{(0)}_1= h_1\\[2mm]
    (0,\dots, 0), & \hbox{\textit{if}}\ \ v^{(0)}\in I_m \ \ \mbox{and} \ \ v^{(0)}_1<h_1
\end{array}%
\right.
\end{equation}

3) If $\theta=\theta_c$ then

\begin{equation}\label{14ñ}
\lim_{n\rightarrow \infty}G^{(n)}(v^{(0)})=\left\{%
\begin{array}{ll}
   {\bf h}(m,2), & \hbox{\textit{if}}\ \ v^{(0)}\in I_m \ \ \mbox{and} \ \ v^{(0)}_1>0\\[2mm]
    (0,\dots, 0), & \hbox{\textit{if}}\ \ v^{(0)}\in I_m \ \ \mbox{and} \ \ v^{(0)}_1\leq 0
\end{array}%
\right.
\end{equation}

4) If $\theta>\theta_c$ then
\begin{equation}\label{14d}
\lim_{n\rightarrow \infty}G^{(n)}(v^{(0)})=\left\{%
\begin{array}{ll}
    {\bf h}(m,2), & \hbox{\textit{if}}\ \ v^{(0)}\in I_m \ \ \mbox{and} \ \ v^{(0)}_1>0\\[2mm]
    {\bf h}(m,1), & \hbox{\textit{if}}\ \ v^{(0)}\in I_m \ \ \mbox{and} \ \ v^{(0)}_1<0\\[2mm]
    (0,\dots, 0), & \hbox{\textit{if}}\ \ v^{(0)}\in I_m \ \ \mbox{and} \ \ v^{(0)}_1=0
\end{array}%
\right.
\end{equation}
\end{lemma}

\begin{proof} Restrict function $G(h)$ to $I_m$, then we get the $j$th coordinate of $G(h)$ (for any $j=1,\dots,m$) is equal to $f_m(h)$ which is introduced in (\ref{rm}). Other coordinates of $G(h)$ are equal to 0. By Lemma \ref{l3} we have that $f_m$ is an increasing function. Here we consider the case when the function $f_m$ has
  three fixed points $0, h_1,h_2$. This proof is more simple for cases when $f_m$ has two fixed points.  We prove the part 2), other parts are similar.
  In case 2), by Lemma \ref{l3} we have that $0<h_1<h_2$ and the point $h_1$
is a repeller, i.e., $f_m'(h_1)>1$ and the points $0, h_2$ are attractive, i.e., $f_m'(0)<1$, $f_m'(h_2)<1$.
Now we shall take arbitrary $x_0>0$ and prove that $x_n =f_m(x_{n-1})$, $n\geq 1$
converges as $n\to\infty$. Consider the following partition $(-\infty,+\infty) = (-\infty,0)\cup\{0\}\cup(0,h_1)\cup \{h_1\}\cup(h_1, h_2)\cup \{h_2\}\cup(h_2,+\infty)$. For any  $x\in (-\infty, 0)$ we have $x < f_m(x) <0$, since $f_m$ is an increasing function, from the last inequalities we get  $x<f_m(x) < f_m^2(x) < f_m(0)=0$. Iterating this argument we obtain $f_m^{n-1}(x)<f_m^{n}(x)<0$, which
for any $x_0\in (-\infty,0)$ gives $x_{n-1}<x_n<0$, i.e., $x_n$ converges and its limit is a fixed point of $f_m$, since $f_m$
has unique fixed point $0$ in $(-\infty, 0]$ we conclude that the limit is $0$. For $x\in (0, h_1)$
we have $h_1>x >f(x)>0$, consequently $x_n > x_{n+1}$, i.e., $x_n$ converges
and its limit is again $0$. Similarly, one can show that if $x_0>h_1$ then $x_n\to h_2$ as $n\to \infty$.
\end{proof}

By (\ref{9}) the asymptotic behavior of the vector $Y_n(\omega)=(Y_n^1(\omega), \dots, Y_n^{q-1}(\omega))$ depends only
on the vector $Y_1(\omega)$, where
\begin{equation}\label{nu}
Y_1^l(\omega)=J\left(c^l(\omega)-c^q(\omega)\right), \ \ l=1,\dots,q-1.
\end{equation}

For a given $m\in\{1,\dots,\lfloor{q\over 2}\rfloor\}$ and $J>0$ we introduce the following
sets of configurations:
$${\mathbb B}_{m}=\{\omega\in \Omega: c^1(\omega)=\dots =c^m(\omega),\ \ c^{m+1}(\omega)=\dots =c^{q-1}(\omega)=c^q(\omega)\},$$
$${\mathbb B}^+_{m,0}=\{\omega\in \mathbb B_m: c^1(\omega)>c^q(\omega)\},$$
$${\mathbb B}^0_{m,0}=\{\omega\in \mathbb B_m: c^1(\omega)=c^q(\omega)\},$$
$${\mathbb B}^-_{m,0}=\{\omega\in \mathbb B_m: c^1(\omega)<c^q(\omega)\},$$
 $${\mathbb B}^+_{m,1}=\{\omega\in \mathbb B_m: J\left(c^1(\omega)-c^q(\omega)\right)>h_1\},$$
  $${\mathbb B}^0_{m,1}=\{\omega\in \mathbb B_m: J\left(c^1(\omega)-c^q(\omega)\right)=h_1\},$$
$${\mathbb B}^-_{m,1}=\{\omega\in \mathbb B_m: J\left(c^1(\omega)-c^q(\omega)\right)<h_1\}.$$

Now taking the coordinates of an initial vector as in (\ref{nu}) by Lemma \ref{l1} and Lemma \ref{l4} we get the following

\begin{thm}\label{t2} 1) If $\theta=\theta_m$, for some $m=1,\dots,\lfloor{q\over 2}\rfloor$ then
\begin{equation}\label{1a}
P^\omega=\left\{%
\begin{array}{ll}
    \mu_1(\theta, m), & \hbox{\textit{if}}\ \ \omega\in \mathbb B_{m,1}^+\cup \mathbb B_{m,1}^0\\[2mm]
    \mu_0(\theta), & \hbox{\textit{if}}\ \ \omega\in \mathbb B^-_{m,1}
    \end{array}%
\right.
\end{equation}

2) If $\theta_m<\theta<\theta_c=q+1$ then

\begin{equation}\label{1b}
P^\omega=\left\{%
\begin{array}{lll}
    \mu_2(\theta, m), & \hbox{\textit{if}}\ \ \omega\in \mathbb B_{m,1}^+\\[2mm]
    \mu_1(\theta,m), & \hbox{\textit{if}}\ \ \omega\in \mathbb B^0_{m,1}\\[2mm]
     \mu_0(\theta), & \hbox{\textit{if}}\ \ \omega\in \mathbb B_{m,1}^-
\end{array}%
\right.
\end{equation}

3) If $\theta=\theta_c$ then

\begin{equation}\label{1ñ}
P^\omega=\left\{%
\begin{array}{ll}
    \mu_2(\theta, m), & \hbox{\textit{if}}\ \ \omega\in \mathbb B^+_{m,0}\\[2mm]
    \mu_0(\theta), & \hbox{\textit{if}}\ \ \omega\in \mathbb B^-_{m,0}\cup \mathbb B^0_{m,0}
\end{array}%
\right.
\end{equation}

4) If $\theta>\theta_c$ then
\begin{equation}\label{1d}
P^\omega=\left\{%
\begin{array}{lll}
    \mu_2(\theta, m), & \hbox{\textit{if}}\ \ \omega\in \mathbb B_{m,0}^+\\[2mm]
    \mu_1(\theta,m), & \hbox{\textit{if}}\ \ \omega\in \mathbb B^-_{m,0}\\[2mm]
     \mu_0(\theta), & \hbox{\textit{if}}\ \ \omega\in \mathbb B_{m,0}^0
\end{array}%
\right. .
\end{equation}
\end{thm}

In the next section we use Theorem \ref{t2} to construct some concrete boundary conditions.

\section{Construction of boundary conditions}

In this section for $k=2$, $J>0$, $q\geq 3$ and $m\in \{1,\dots,\lfloor{q\over 2}\rfloor\}$
we shall give examples of boundary configurations.

For $k=2$ by Proposition \ref{pw} there are up to $2^q-1$ TISGMs .
We shall consider only $\mu_0(\theta)$ corresponding to $h=(0,0,\dots,0)$ and
$\mu_i(\theta,m)$ corresponding to vector ${\bf h}(m,i)=(\underbrace{h_i,h_i,\dots,h_i}_m,\underbrace{0,0,\dots,0}_{q-m-1})$, $i=1,2$
with \begin{equation}\label{so}\begin{array}{ll}
h_1=2\ln{\theta-1-\sqrt{(\theta-1)^2-4m(q-m)}\over 2m},\\[2mm]
h_2=2\ln{\theta-1+\sqrt{(\theta-1)^2-4m(q-m)}\over 2m}.
\end{array}
\end{equation}
Using Theorem \ref{t2} we shall give some boundary conditions for each measure $\mu_i$. Boundary conditions for the remaining measures can be obtained by using the permutation symmetry of the Potts model.

{\bf Case $\mu_0$.} If $\theta<\theta_1$ then $\mu_0$ is a unique measure, and one can take any
boundary configuration $\omega$ to have $P^\omega=\mu_0$. But for $\theta\geq \theta_1$ one has to
check the conditions of Theorem \ref{t2} to have the limiting measure equal to $\mu_0$. 

For example, if $\theta=\theta_m<q+1$ for some $m\in \{1,\dots,\lfloor{q\over 2}\rfloor\}$ then we must take $\omega\in \mathbb B_{m,1}^-$, i.e.
\begin{equation}\label{w0}
J(c^1(\omega)-c^q(\omega))<h_1, \ \ c^1(\omega)=\dots=c^m(\omega), \ \ c^{m+1}(\omega)=\dots=c^q(\omega).
\end{equation}
\begin{rk} For a given TISGM to find its boundary condition one has to construct 
configurations $\omega$ which satisfy the system (like (\ref{w0})) 
derived by corresponding sufficient conditions of Theorem \ref{t2}. Below we give
several examples of such configurations.
It will be clear from our examples that some TISGM may have an infinite set of boundary configurations\footnote{This remark and some examples below are added corresponding to a suggestion of a reviewer.}.
\end{rk}

Since $h_1>0$ (see Lemma \ref{l3}), the system (\ref{w0}) is satisfied, {\it for
example}, if $\omega$ satisfies one of the following 
\begin{itemize}
\item $c^i(\omega)=0$, $i=1,2,\dots,m$, i.e., if $\omega(x)=i$, then on direct 
successors $x_1$, $x_2$ of $x$
one has $\omega(x_1)\ne i$, $\omega(x_2)\ne i$; 
and $c^j(\omega)=1$ for each $j=m+1,\dots,q$, i.e., 
if $\omega(x)=j$ then one has $\omega(x_1)=j$ but $\omega(x_2)\ne j$. 
See Fig.\ref{f1} for an example of  such
a configuration for $q=5$ and $m=2$.
\item $c^i(\omega)=1$, $i=1,2,\dots,m$, i.e.,
if $\omega(x)=i$ then $\omega(x_1)=i$ but $\omega(x_2)\ne i$ and it
contains $j\in \{m+1,\dots,q\}$ in such a way that if
$\omega(x)=j$ then on direct successors $x_1$, $x_2$ of $x$
one has $\omega(x_1)=\omega(x_2)=j$. In this case $c^i(\omega)=2$ for each $i=m+1,\dots,q$.
See Fig.\ref{f1a} for an example of such
a configuration for $q=15$ and $m=3$.
\end{itemize}
\begin{figure}
\centering
\includegraphics[width=12.5cm]{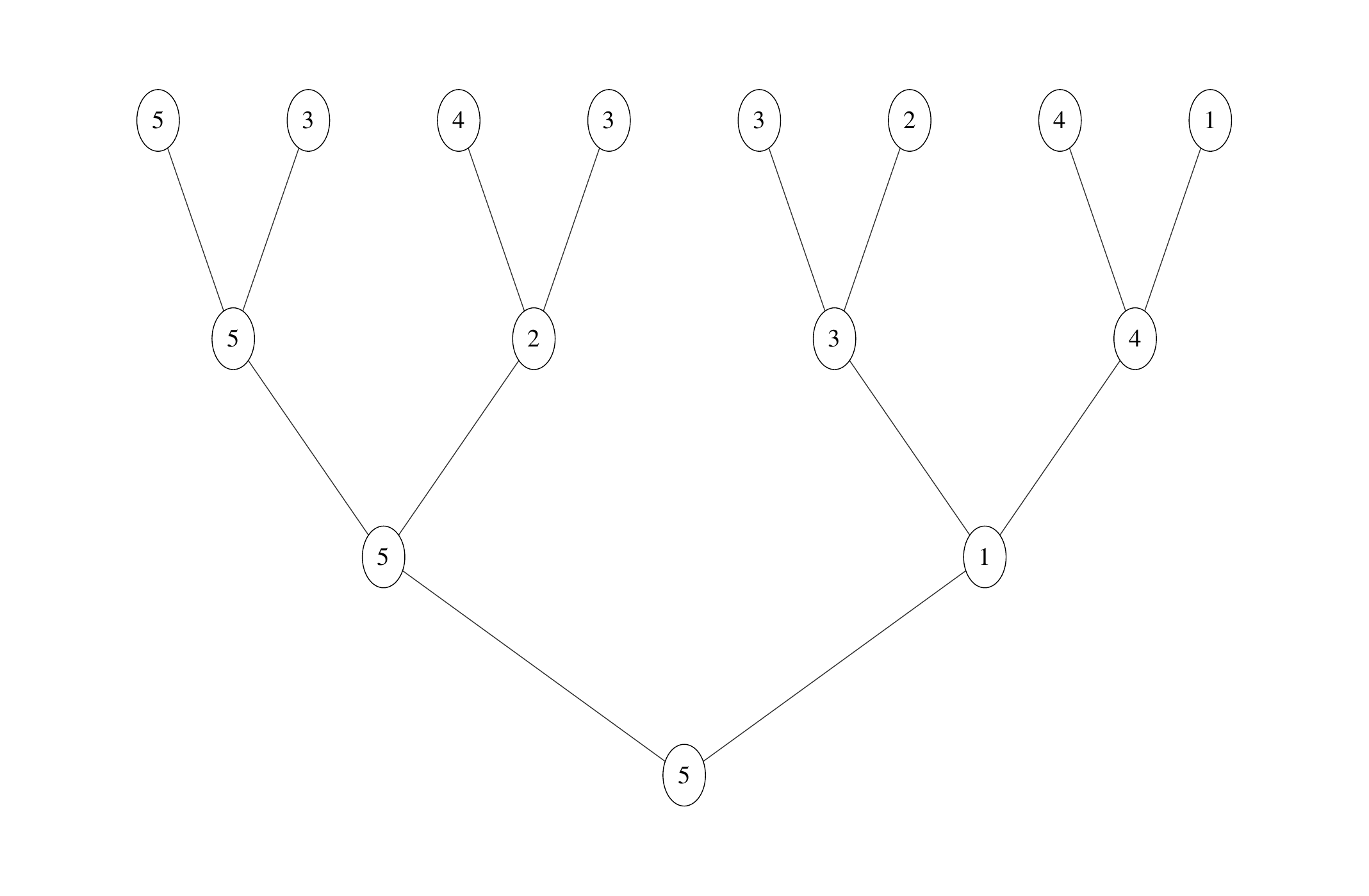}
\caption{An example of a boundary condition for the TISGM $\mu_0$, $q=5$, $m=2$. Here $c^1(\omega)=c^2(\omega)=0$, $c^3(\omega)=c^4(\omega)=c^5(\omega)=1$. Note that the values $1, 2$ occur sufficiently often keeping the conditions $c^1(\omega)=c^2(\omega)=0$.}
\label{f1}.
\end{figure}
\begin{figure}
\centering
\includegraphics[width=12.5cm]{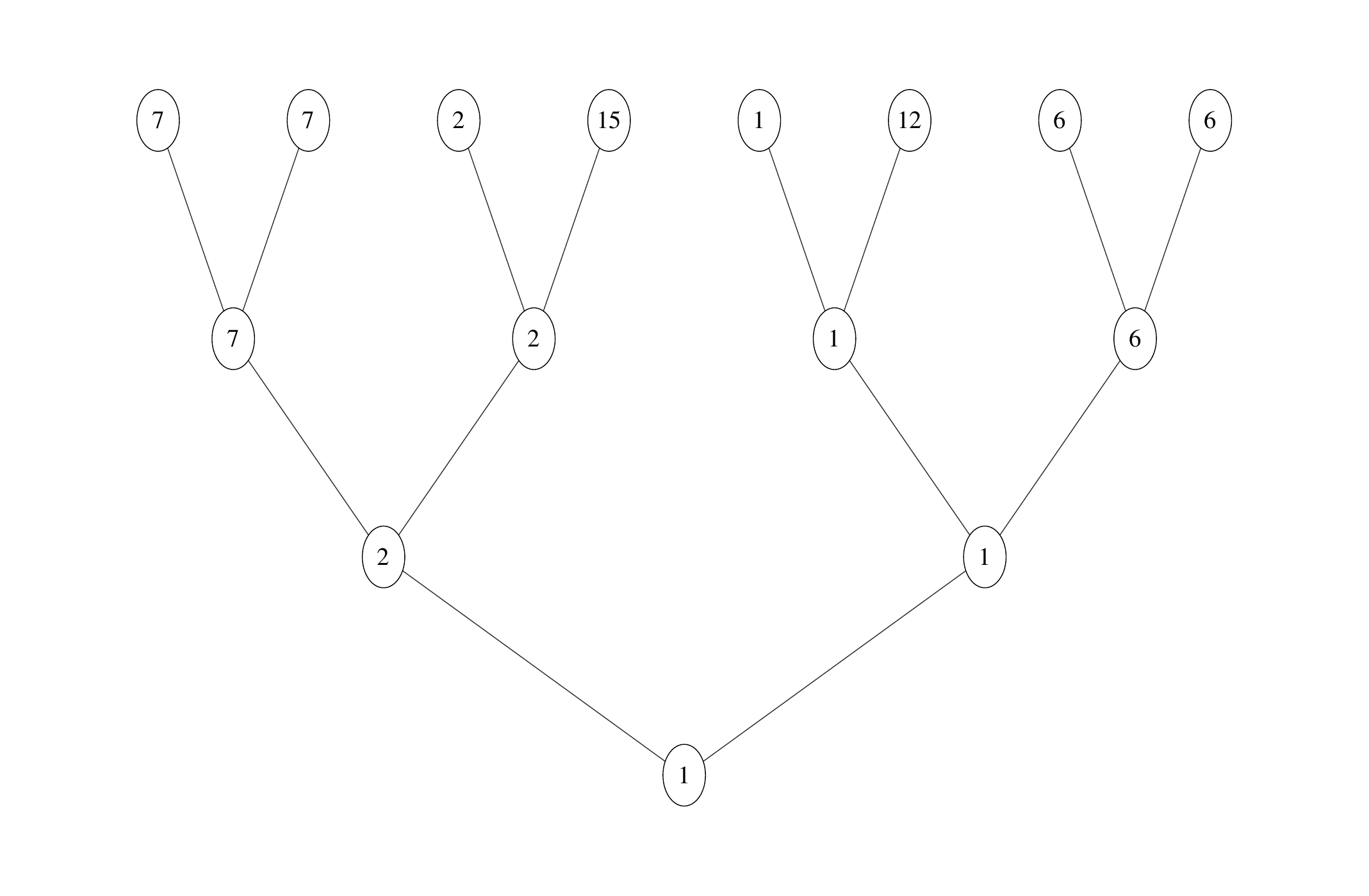}
\caption{An example of a boundary condition for the TISGM $\mu_0$, for $q=15$, $m=3$. Here we have $c^i(\omega)=1$, $i=1,2,3$ and $c^j(\omega)=2$, $j=4,5,\dots,15$.}
\label{f1a}.
\end{figure}
{\bf Case $\mu_1$.} This measure exists for $\theta\geq \theta_m$.

{\it Subcase $\theta=\theta_m$.} By Theorem \ref{t2} for $\mu_1$ we have the condition
$\omega\in \mathbb B_{m,1}^+\cup \mathbb B_{m,1}^0$, i.e.
\begin{equation}\label{m1}
J(c^1(\omega)-c^q(\omega))\geq h_1, \ \ c^1(\omega)=\dots=c^m(\omega), \ \ c^{m+1}(\omega)=\dots=c^q(\omega).
\end{equation}
Note that $J=\ln\theta$. Assume $\ln\theta_m\geq h_1=2\ln{\theta_m-1\over 2m}$ which is equivalent to the following
\begin{equation}\label{qm}
2m\sqrt{m^2+1}(\sqrt{m^2+1}-m)\leq q\leq 2m\sqrt{m^2+1}(\sqrt{m^2+1}+m).
\end{equation}
Under condition (\ref{qm}) the system (\ref{m1}) is satisfied {\it for
example}, if $m\geq 1$ and
$\omega$ is such that $c^i(\omega)=1$, $i=1,\dots,m$ and 
$c^j(\omega)=0$, $i=m+1,\dots,q$. See Fig.\ref{f2} for an example of such
a configuration for $q=5$ and $m=2$ and Fig.\ref{f2a} for configuration in case $q=10$, $m=4$. 

\begin{figure}
\centering
\includegraphics[width=12.5cm]{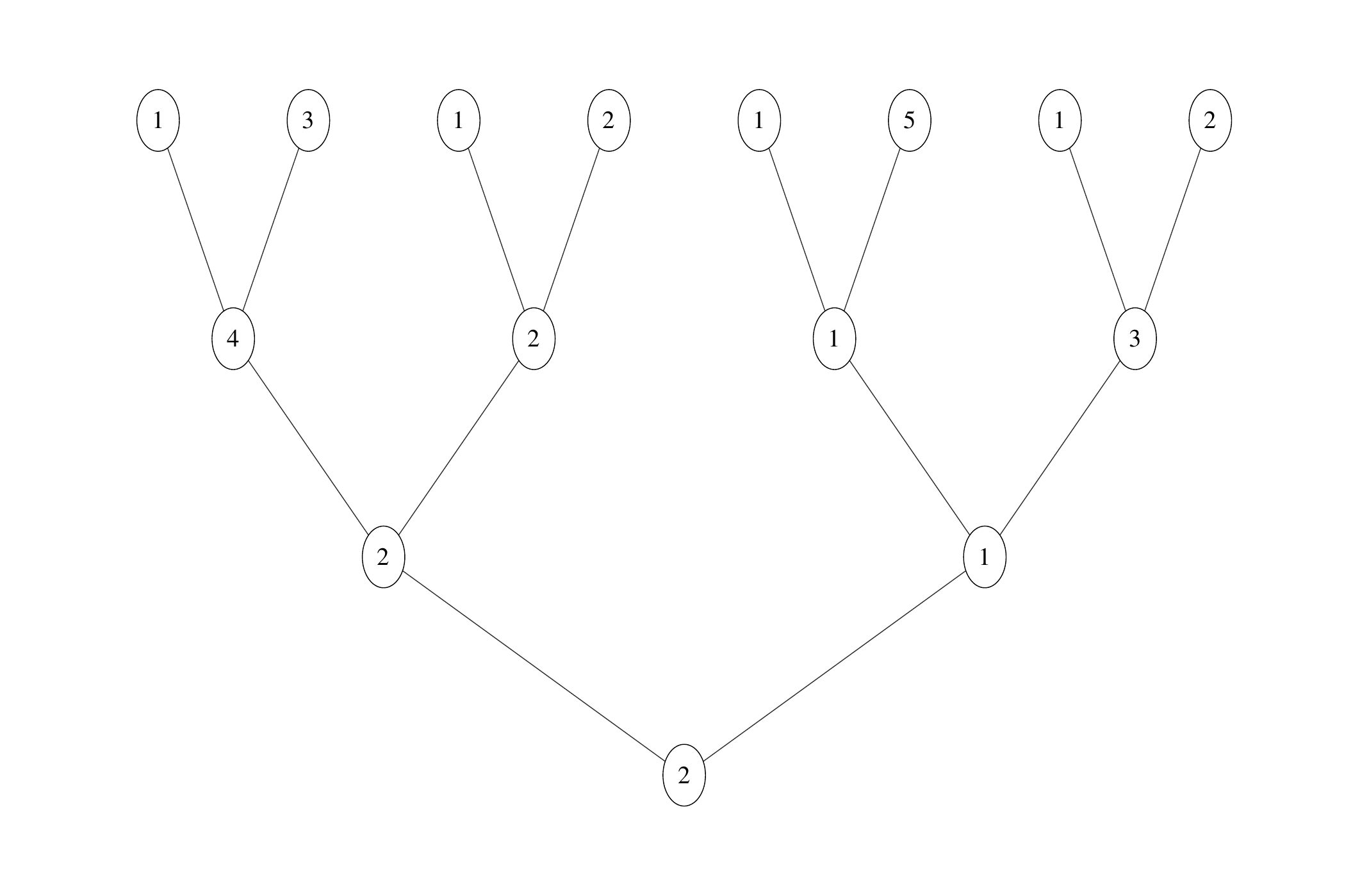}
\caption{An example of a boundary condition for the TISGM $\mu_1$, $q=5$, $m=2$. Here we have $c^1(\omega)=c^2(\omega)=1$, $c^3(\omega)=c^4(\omega)=c^5(\omega)=0$.}
\label{f2}
\end{figure}
\begin{figure}
\centering
\includegraphics[width=12.5cm]{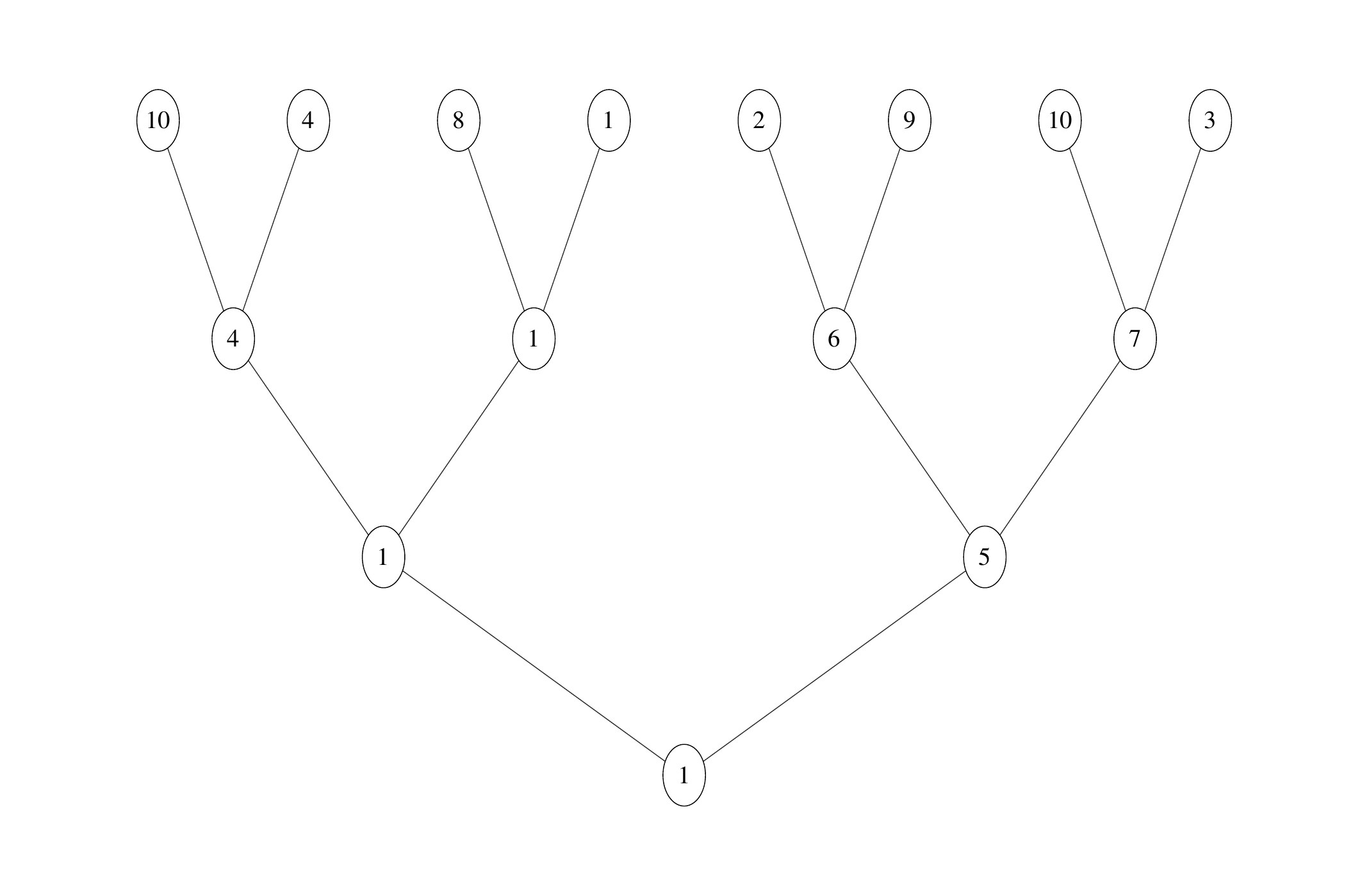}
\caption{An example of a boundary condition for the TISGM $\mu_1$, $q=10$, $m=4$. Here we have $c^i(\omega)=1$, $i=1,2,3,4$; $c^j(\omega)=0$, $j=5,\dots,10$.}
\label{f2a}
\end{figure}

{\it Subcase $\theta_m<\theta<q+1$.} From Theorem \ref{t2} for $\mu_1$ we have the condition
\begin{equation}\label{m11}
J(c^1(\omega)-c^q(\omega))=h_1, \ \ c^1(\omega)=\dots=c^m(\omega), \ \ c^{m+1}(\omega)=\dots=c^q(\omega).
\end{equation}
Assume $\theta$ is a solution to the equation $\ln\theta=h_1$.
Computer analysis shows that this equation has a solution if
for example $q=17$, $m=1$ or $q=55$, $m=2$.  So assuming existence
of such a solution $\theta=\theta^*$ one can take a boundary condition configuration as in the previous case (like in Fig.\ref{f2})

{\it Subcase $\theta=q+1$.} In this case we have $\mu_1=\mu_0$. Therefore the boundary condition can be taken as in Case $\mu_0$.

{\it Subcase $\theta>q+1$.} For $\mu_1$ we should have
\begin{equation}\label{m111}
c^1(\omega)-c^q(\omega)<0, \ \ c^1(\omega)=\dots=c^m(\omega), \ \ c^{m+1}(\omega)=\dots=c^q(\omega).
\end{equation}
we can take a configuration $\omega$ such that $c^i(\omega)=0$, $i=1,\dots,m$ and $c^j(\omega)=1$, $j=m+1,\dots,q$.
(See Fig.\ref{f1} for such a configuration).
\begin{rk} From above examples one can see that depending on the temperature (equivalently depending on the parameter $\theta$) 
a configuration may be the boundary condition for different TISGMs. For example, the configuration given in Fig.\ref{f1} is a boundary condition for TISGM $\mu_0$ if $\theta=\theta_2<q+1$, but the same configuration is the boundary condition for TISGM $\mu_1$ if $\theta>q+1$.   \end{rk}
{\bf Case $\mu_2$.} Check the conditions of Theorem \ref{t2}:

{\it Subcase $\theta=\theta_m$.} In this case we have $\mu_2=\mu_1$, i.e. the boundary condition is constructed in the previous case.

{\it Subcase $\theta_m<\theta<q+1$.} From Theorem \ref{t2} for $\mu_2$ we have the condition
\begin{equation}\label{m21}
J(c^1(\omega)-c^q(\omega))>h_1, \ \ c^1(\omega)=\dots=c^m(\omega), \ \ c^{m+1}(\omega)=\dots=c^q(\omega).
\end{equation}
If $\ln \theta>h_1$ then it is easy to see that $\omega$ satisfies the condition (\ref{m21}) if 
$m\geq 1$ and $\omega$  is such that $c^i(\omega)=1$, $i\in \{1,\dots,m\}$ and $c^j(\omega)=0$, $i\in \{m+1,\dots,q\}$ (like in Fig.\ref{f2})

{\it Subcase $\theta\geq q+1$.}  For $\mu_2$ we should have
\begin{equation}\label{m212}
c^1(\omega)-c^q(\omega)>0, \ \ c^1(\omega)=\dots=c^m(\omega), \ \ c^{m+1}(\omega)=\dots=c^q(\omega).
\end{equation}
Condition (\ref{m212}) is easily checkable. For example, configurations shown in Fig. \ref{f2} and Fig.\ref{f2a} satisfy this condition.   
\section*{ Acknowledgements}
U.Rozikov thanks Aix-Marseille University Institute for Advanced Study IM\'eRA
(Marseille, France) for support by a residency scheme.
We thank both referees for their helpful suggestions.

\end{document}